\definecolor{supcol}{RGB}{10,50,180}
\definecolor{eqcol}{RGB}{220,10,100}
\newtheorem{theorem}{Theorem}
\newtheorem{proposition}[theorem]{Proposition}
\DeclareMathOperator{\atanh}{arctanh}
\newcommand{\mket}[1]{{|#1\rangle}}
\newcommand{\mca}{\mathcal}
\newcommand{\mbb}{\mathbb}
\newcommand{\msf}{\mathsf}
\newcommand{\msc}{\mathscr}
\begin{document}
\title{Geometric characterization for cyclic heat engines far from equilibrium}

\author{Tan Van Vu}
\email{tan.vu@riken.jp}
\affiliation{Analytical quantum complexity RIKEN Hakubi Research Team, RIKEN Center for Quantum Computing (RQC), 2-1 Hirosawa, Wako, Saitama 351-0198, Japan}

\author{Keiji Saito}
\email{keiji.saitoh@scphys.kyoto-u.ac.jp}
\affiliation{Department of Physics, Kyoto University, Kyoto 606-8502, Japan}

\date{\today}

\begin{abstract}
Considerable attention has been devoted to microscopic heat engines in both theoretical and experimental aspects. Notably, the fundamental limits pertaining to power and efficiency, as well as the tradeoff relations between these two quantities, have been intensively studied. This study aims to shed further light on the ultimate limits of heat engines by exploring the relationship between the geometric length along the path of cyclic heat engines operating at arbitrary speeds and their power and efficiency. We establish a tradeoff relation between power and efficiency using the geometric length and the timescale of the heat engine. Remarkably, because the geometric quantity comprises experimentally accessible terms in classical cases, this relation is useful for the inference of thermodynamic efficiency. Moreover, we reveal that the power of a heat engine is always upper bounded by the product of its geometric length and the statistics of energy. Our results provide a geometric characterization of the performance of cyclic heat engines, which is universally applicable to both classical and quantum heat engines operating far from equilibrium.
\end{abstract}

\pacs{}
\maketitle

\section{Introduction}
Since the advent of macroscopic heat engines in industry, the theoretical investigation of thermodynamic heat engines has been a fundamental area of study within the realm of physics.
One well-known result is the Carnot bound, which imposes an upper bound on the efficiency of heat engines according to the second law of thermodynamics.
Beyond the macroscopic regime, our understanding of small heat engines has been significantly augmented, thanks to the recent progress in stochastic and quantum thermodynamics for microscopic systems \cite{Sekimoto.2010,Seifert.2012.RPP,Vinjanampathy.2016.CP,Deffner.2019}. 
Additionally, owing to technological advancements, it is now feasible to experimentally demonstrate the theoretical results of nanoscale heat engines on various platforms \cite{Blickle.2012.NP,Abah.2012.PRL,Martinez.2016.NP,Ronagel.2016.S,Lindenfels.2019.PRL,Klatzow.2019.PRL,Peterson.2019.PRL,Maslennikov.2019.NC,Ono.2020.PRL,Bouton.2021.NC,Zhang.2022.NC,Myers.2022.QS,Cangemi.2023.arxiv}.

The primary focus of research in the field of heat engines pertains to power, efficiency, and the inherent tradeoffs between these incompatible quantities \cite{Benenti.2011.PRL,Tomas.2013.PRE,Kosloff.2014.ARPC,Whitney.2014.PRL,Izumida.2014.PRL,Brandner.2015.PRE,Proesmans.2015.PRL,Shiraishi.2016.PRL,Brandner.2016.PRE,Proesmans.2016.PRL,Dechant.2017.EPL,Polettini.2017.EPL,Watanabe.2017.PRL,Kosloff.2017.E,Ma.2018.PRE,Pietzonka.2018.PRL,Manikandan.2019.PRL}.
The investigation of whether a heat engine can operate at maximum efficiency while still delivering finite power has been of particular interest.
Regarding this incompatibility between power and efficiency, several tradeoff relations have been uncovered for both cyclic and steady-state heat engines \cite{Shiraishi.2016.PRL,Pietzonka.2018.PRL,Menczel.2020.PRR}.
It has been demonstrated that the Carnot efficiency can only be achieved at finite power under extreme circumstances, such as diverging power fluctuations \cite{Campisi.2016.NC,Holubec.2017.PRE,Pietzonka.2018.PRL} and vanishing relaxation times \cite{Holubec.2018.PRL,Miura.2022.PRE}.

Recently, novel insights into the performance of heat engines have been obtained through the lens of information geometry \cite{Ruppeiner.1995.RMP,Abiuso.2020.PRL,Brandner.2020.PRL,Miller.2020.PRL.GWF,Vu.2021.PRL,Watanabe.2022.PRR,Eglinton.2022.PRE}.
It is noteworthy that several geometric quantities provide us useful physical interpretations that are commonly seen in various models.
For example, the thermodynamic length is interpreted as the minimum dissipation for a given path \cite{Salamon.1983.PRL,Crooks.2007.PRL,Zulkowski.2012.PRE,Sivak.2012.PRL,Machta.2015.PRL,Scandi.2019.Q}.
In the context of such a physical interpretation, the utilization of geometry in heat engines operating in the slow-driving regime results in a constraint between power and efficiency \cite{Brandner.2020.PRL,Frim.2022.PRL}, which further provides design principles for optimal parameterization of a given protocol \cite{Brandner.2020.PRL,Miller.2020.PRL.GWF,Alonso.2022.PRXQ}.
Despite the importance of geometric characterization for heat engines, previous studies on engine cycles have been limited to the near-equilibrium regime. 
Therefore, it is reasonable to ask whether geometric characterizations exist in the nonequilibrium regime and, if so, how heat engines can be physically characterized.

\begin{figure}[b]
\centering
\includegraphics[width=1.0\linewidth]{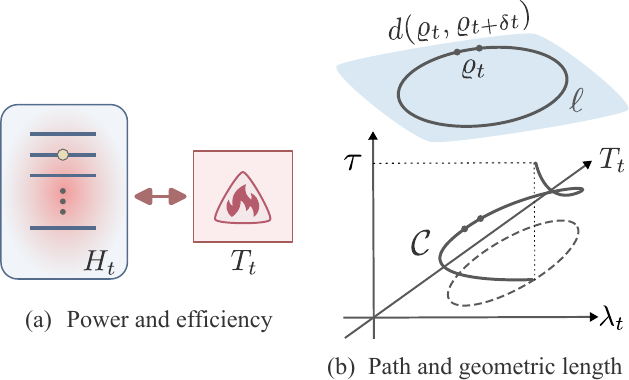}
\protect\caption{(a) Schematic of cyclic heat engines. The working medium is constantly coupled to a heat bath. The system Hamiltonian $H_t=H(\lambda_t)$ and the bath temperature $T_t$ are periodically controlled in order to extract work. (b) For a given control path $\mca{C}$, the corresponding curve of engine states $\{\varrho_t\}_{0\le t\le\tau}$ is depicted by the solid line, and its geometric length with respect to a metric $d(\cdot,\cdot)$ is denoted by $\ell$. Our aim is to elucidate the relationship between the geometric length $\ell$ and power and efficiency of heat engines.}\label{fig:Cover}
\end{figure}

In the present work, we address this open question by imposing three criteria on geometric quantities.
(i) First, geometric quantities should depend only on a predetermined path, denoted as $\mca{C}$, of a heat-engine cycle, (ii) second, they should possess physical interpretations, and (iii) third, they should be experimentally computable at least in classical cases.
Concerning the path dependency, a schematic illustration extending from the quasi-static case to a non-equilibrium scenario is shown in Fig.~\ref{fig:Cover}(b). 
As for the second criterion, we consider the geometric length defined on the space of state probabilities (density matrices in quantum cases) of the heat engine, effectively quantifying the degree of state change [cf.~Eq.~\eqref{metric.def}]. 
In the stationary state, the geometric length is solely dependent on the path.
We impose the third criterion, as estimable geometric quantities can facilitate thermodynamic inference of efficiency, especially in classical cases, without requiring detailed knowledge of the system. 
It is important to note that in experiments, state probabilities and jump statistics are measurable, whereas accurately measuring heat flow via calorimetry and determining precise dissipative dynamics are generally challenging tasks.

Considering heat engines that obey Markovian dissipative dynamics, we first derive a geometric tradeoff between power and efficiency [cf.~Eq.~\eqref{eq:power.eff.tradeoff}].
This relation means that the product of power and the difference between efficiency and Carnot efficiency cannot be arbitrarily small if the geometric length and timescale of heat engines are predetermined. 
This also provides a useful tool for inferring efficiency, as the geometric bound is written solely in terms of experimentally accessible quantities in classical cases.
Subsequently, we prove that power is always upper bounded by the geometric length, energy fluctuation, and energy gap [cf.~Eq.~\eqref{eq:power.bound}].
These findings are fundamental and can be applied to both classical and quantum heat engines.

\section{Setup}
\subsection{Cyclic heat engines}
We consider a heat engine using a finite-dimensional discrete system as the working medium. 
The heat engine is coupled to a heat bath and is driven by a periodic control protocol.
To study both classical and quantum dissipative heat engines in a unified manner, we employ the Lindblad master equation \cite{Lindblad.1976.CMP,Gorini.1976.JMP}. 
This equation describes the time evolution of the density matrix of the system as follows:
\begin{equation}\label{eq:mas.eq}
\dot\varrho_t=-i\hbar^{-1}[H_t,\varrho_t]+\sum_k\mca{D}[L_k(t)]\varrho_t,
\end{equation}
where $H_t$ is the controllable Hamiltonian, $\mca{D}[L]\varrho\coloneqq L\varrho L^\dagger-\{L^\dagger L,\varrho\}/2$ is the dissipator, and $\{L_k(t)\}$ are the Lindblad jump operators.
To guarantee the thermodynamic consistency, we assume the local detailed balance condition; that is, the jump operators come in pairs $(k,k')$ such that $L_k(t)=e^{\beta_t\omega_k(t)/2}L_{k'}(t)^\dagger$, where $\beta_t=1/(k_BT_t)$ is the inverse temperature and $\omega_k(t)=-\omega_{k'}(t)$ denotes the energy change associated with the $k$th jump.
Hereafter, we set both $\hbar$ and $k_B$ to unity for simplicity.
The temperature of the heat bath can be generically modulated as $T_t={T_cT_h}/{[T_h+\alpha_t(T_c-T_h)]}$ \cite{Brandner.2015.PRX,Brandner.2016.PRE}, where $T_c$ and $T_h$ are the minimum and maximum temperatures reached by the heat bath, respectively, and $\alpha_t$ is a periodic function satisfying $\min_t\alpha_t=0$ and $\max_t\alpha_t=1$.
The work power extractable from the heat engine can be calculated as
\begin{equation}\label{eq:power.def}
P_w\coloneqq-\tau^{-1}\int_0^\tau\tr{\dot{H}_t\varrho_t}\dd{t}.
\end{equation}
The irreversible entropy production of the heat engine can be defined as the sum of changes in the system and environmental entropy, given by
\begin{equation}\label{eq:ent.prod}
  \Sigma_\tau\coloneqq\Delta S-  \int_0^\tau\beta_tQ_t\dd{t},
\end{equation}
where $\Delta S\coloneqq\tr{\varrho_0\ln\varrho_0}-\tr{\varrho_\tau\ln\varrho_\tau}$ is the production of the von Neumann entropy and $Q_t\coloneqq\tr{H_t\dot\varrho_t}$ is the rate of heat absorbed by the engine.
By exploiting the periodicity, we obtain $\Delta S=0$ and arrive at the following expression of entropy production:
\begin{equation}\label{eq:ent.prod.exp}
\Sigma_\tau=\frac{\tau}{T_c}\qty(-P_w+\eta_CJ_q)\ge 0,
\end{equation}
where $\eta_C\coloneqq 1-T_c/T_h$ is the Carnot efficiency and $J_q\coloneqq\tau^{-1}\int_0^\tau\alpha_tQ_t\dd{t}$ is the time-average heat supplied to the engine.
Expression \eqref{eq:ent.prod.exp} naturally leads to the following definition of the efficiency of cyclic heat engines \cite{Brandner.2016.PRE,Miller.2021.PRL.TUR}:
\begin{equation}
\eta\coloneqq\frac{P_w}{J_q}\le\eta_C.
\end{equation}
In the case where $\alpha_t$ is the step function, the conventional setup, wherein the system is coupled to two distinct heat baths with respectively constant temperatures $T_c$ and $T_h$, is recovered.
Another quantity of interest is the timescale of heat engines, which can be characterized by the time-average number of jumps:
\begin{equation}
\overline{a}_\tau\coloneqq\tau^{-1}\int_0^\tau\sum_k\tr{L_k(t)\varrho_tL_k(t)^\dagger}\dd{t}.
\end{equation}
This quantity is also known as dynamical activity in the literature \cite{Maes.2020.PR} and plays an important role in constraining speed and fluctuation of nonequilibrium systems \cite{Shiraishi.2018.PRL,Hasegawa.2020.PRL,Vu.2022.PRL.TUR}.
In quantum cases, dynamical activity is the average number of quantum jumps \cite{Gleyzes.2007.N,Vijay.2011.PRL,Vool.2014.PRL,Vool.2016.PRL}. 
In classical cases, dynamical activity can be easily calculated by taking the ensemble average of the number of jumps between different states over the stochastic trajectories in experiments.

\subsection{Relevant metrics}
Next, we discuss several distance measures that will be used in this study.
Given a curve of density matrices $\{\varrho_t\}_{0\le t\le\tau}$ on a control path $\mca{C}$, its geometric length with respect to an arbitrary metric $d(\cdot,\cdot)$ can be defined as
\begin{equation}
\ell(\mca{C})\coloneqq\lim_{\delta t\to 0}\sum_{k=0}^{K-1}d(\varrho_{k\delta t},\varrho_{(k+1)\delta t}), \label{metric.def}
\end{equation}
where $\delta t\coloneqq\tau/K$ is the discretized time interval.
Relevant metrics that will be used for $d(\varrho,\sigma)$ are a Wasserstein distance $W(\varrho,\sigma)\coloneqq\sum_n|a_n-b_n|/2$ \cite{Vu.2023.PRX}, where $\{a_n\}$ and $\{b_n\}$ are increasing eigenvalues of $\varrho$ and $\sigma$, respectively, and the trace distance $T(\varrho,\sigma)\coloneqq\tr{|\varrho-\sigma|}/2$.
The classical versions of these measures are used for classical cases.
When $d$ is a Riemannian metric defined via an inner product $g_\varrho$, the curve length can also be calculated as $\ell=\int_0^\tau\sqrt{g_{\varrho_t}(\dot{\varrho}_t,\dot{\varrho}_t)}\dd{t}$.
Other commonly used metrics are the contractive Riemannian metrics defined on the space of density matrices, which can be completely characterized by Morozova-Chentsov (MC) functions \cite{Morozova.1991.JSM,Petz.1996.LAA} and are defined in the following form:
\begin{equation}
g_{\varrho}(X,Y)\coloneqq\tr{X[\msc{R}_\varrho f(\msc{L}_\varrho\msc{R}_\varrho^{-1})]^{-1}Y}.
\end{equation}
Here, $X$ and $Y$ are self-adjoint traceless matrices, $\msc{L}_\varrho$ and $\msc{R}_\varrho$ are two super-operators, defined as $\msc{L}_\varrho(X)\coloneqq\varrho X$ and $\msc{R}_\varrho(X)\coloneqq X\varrho$, and $f(t)$ is a MC function satisfying three conditions: (1) $f(X)\preceq f(Y)$ for any semipositive-definite operators $X \preceq Y$, (2) $f(t)=tf(t^{-1})$, and (3) $f(1)=1$. 
It has been shown that a MC function always fulfills $f_{\rm min}(t)\le f(t)\le f_{\rm max}(t)$ \cite{Kubo.1980.MA}, where $f_{\rm min}(t)=2t/(1+t)$ and $f_{\rm max}(t)=(1+t)/2$ are the minimum and maximum MC functions, respectively.
Hereafter, we exclusively use the right logarithmic derivative Fisher information, which is induced by the minimum MC function $f_{\rm min}(t)$. This quantum Fisher information metric can be explicitly expressed as $g_\varrho(\dot{\varrho},\dot{\varrho})=\tr{\dot{\varrho}^2\varrho^{-1}}$.
We emphasize that these metrics provide us the physical interpretations, i.e., the degree of state change along the path $\mca{C}$.

\section{Results}
We are now ready to explain two main results, whose proofs are postponed to the end of the section.
Our first main result is a fundamental tradeoff between power and efficiency of cyclic heat engines. We prove that
\begin{align}
P_w\qty(\frac{\eta_C}{\eta}-1) &\ge g(\mca{C}),\label{eq:power.eff.tradeoff}\\
g(\mca{C})&\coloneqq 2T_c\frac{\ell_{W}(\mca{C})}{\tau}\atanh\qty[\frac{\ell_{W}(\mca{C})}{\tau\overline{a}_\tau}],
\end{align}
where $\ell_{W}$ denotes the geometric length with respect to the quantum Wasserstein distance.
For classical cases, $\ell_{W}$ is replaced by the geometric length with the classical version of the Wasserstein distance (see Appendix \ref{app:cla.tradeoff} for more details).
Note that the classical Wasserstein metric $W$ can be easily calculated using the linear programming method \cite{Vu.2023.PRL.TSL}. 
Therefore, the geometric length $\ell_{W}$ is experimentally computable because $d\ell_{W}=W(\varrho_t,\varrho_{t+dt})$ and state probabilities can be accurately obtained in experiments.

Some remarks on inequality \eqref{eq:power.eff.tradeoff} are given in order. 
First, $g(\mca{C})$ comprises experimentally accessible quantities, especially in classical cases, regardless of the model's specific details. 
As such, it satisfies all the criteria (i)-(iii) in the introduction. 
In classical cases, measuring both the state probability and dynamical activity is considerably easier compared to the calorimetric measurement of heat flow. 
Therefore, inequality \eqref{eq:power.eff.tradeoff} can be utilized for thermodynamic inference.
Second, it indicates a novel tradeoff between power and efficiency; that is, both $P_w$ and $\eta_C/\eta-1$ cannot be simultaneously small, given that the geometric length, dynamical activity, and time duration are fixed.
Third, it provides a comprehensive characterization about the attainability of Carnot efficiency at finite power.
By rearranging Eq.~\eqref{eq:power.eff.tradeoff}, an upper bound on efficiency can be obtained as
\begin{equation}\label{eq:eff.upb}
	\eta\le\frac{\eta_C}{\mca{B}+1},
\end{equation}
where $\mca{B}\coloneqq 2T_c\ell_{W}\atanh\qty({\ell_{W}}/{\tau\overline{a}_\tau})/(\tau P_w)$.
Evidently, $\eta\to\eta_C$ only if $\mca{B}\to 0$.
This essentially provides a condition for attaining Carnot efficiency at finite power and finite time: $\tau\overline{a}_\tau\gg\ell_W$ \cite{fnt1}.
In other words, the dynamical activity must diverge relative to the curve length.
Since dynamical activity is related to the inverse of the relaxation time of heat engines, the divergence of dynamical activity is connected to the vanishing of the relaxation time, which is consistent with results reported previously for Langevin dynamics \cite{Miura.2022.PRE}.
In the quantum regime, the divergence of dynamical activity can be achieved by exploiting coherence between degenerate energy eigenstates; as a result, power can be scaled as $O(N)$ while $\eta_C/\eta-1=O(1/N)$, where $N$ is the number of degeneracy in the system Hamiltonian \cite{Tajima.2021.PRL}.
Fourth, in the adiabatic regime, the tradeoff relation leads to the following geometric bound on efficiency (refer to Appendix \ref{app:adia.tradeoff} for more details):
\begin{equation}\label{eq:tradeoff.adia}
	\eta\le\eta_C\qty(1-\frac{2T_c}{\overline{a}_\tau^\infty}\frac{(\ell_W^\infty)^2}{\mca{W}\tau}).
\end{equation}
Here, $\mca{W}$, $\ell_W^\infty$, and $\overline{a}_\tau^\infty$ are the quasistatic work, geometric length, and dynamical activity, respectively \cite{fnt2}.
It is noteworthy that upper bound \eqref{eq:tradeoff.adia} is in terms of only quantities at equilibriums, whereas the extant bound \cite{Brandner.2020.PRL} involves a response quantity from equilibriums.

Last, we notice that Eq.~\eqref{eq:power.eff.tradeoff} provides a lower bound on the power, and combining that with an extant relation yields
\begin{align}
  g(\mca{C})\qty(\frac{\eta_C}{\eta}-1)^{-1}\le P_w \le \Theta T_c^{-1}\eta(\eta_C-\eta), 
\end{align}
where the upper bound was given in the previous study \cite{Shiraishi.2016.PRL}.
Here, $\Theta$ is a system-dependent quantity for cyclic engines.
Most studies thus far have provided the upper bounds for the power \cite{Shiraishi.2016.PRL,Pietzonka.2018.PRL}.
Therefore, our relation can be considered a complementary tradeoff relation for heat engines.

Next, we present our second main result, which addresses the ultimate limit on the extractable power of cyclic heat engines, given a fixed geometric length of the curve of density matrices. 
Using the quantum Fisher information metric and the trace distance to quantify the geometric length, we prove that the power is always upper bounded by the geometric length and the energy statistics as
\begin{equation}\label{eq:power.bound}
P_w\le\min\qty{\frac{\ell_{F}(\mca{C})}{\tau}\max_t\Delta H_t,\frac{\ell_{T}(\mca{C})}{\tau}\max_t\|H_t\|_*}.
\end{equation}
Here, $\ell_{F}$ and $\ell_T$ denote the geometric length associated with the quantum Fisher information metric and the trace distance, respectively, $(\Delta H_t)^2\coloneqq\tr\{H_t^2\varrho_t\}-\tr\{H_t\varrho_t\}^2$ is the energy fluctuation, and $\|H_t\|_*\coloneqq\max_{m,n}|\varepsilon_m(t)-\varepsilon_n(t)|$ is the energy gap, where $\{\varepsilon_n(t)\}$ are eigenvalues of $H_t$.
Since inferring the energy gap and energy fluctuation through experiments can be challenging, it is appropriate to argue that Eq.~\eqref{eq:power.bound} fulfills only criteria (i) and (ii) outlined in the introduction. Nevertheless, it provides a physically essential implication: to achieve a high power output in cyclic heat engines for a given geometric length, it is necessary to increase both the energy fluctuation and the energy gap.
This implication is somewhat reminiscent of speed limits in closed quantum systems, in which fast state transformations analogously require large energy fluctuation and energy gap \cite{Mandelstam.1945.JP,Margolus.1998.PD}. 
If these energy statistics are finite, the geometric length limits the power.

\subsection{Numerical demonstration}
We exemplify our results using a superconducting qubit heat engine, which is a simple model of the recent experimental realization \cite{Ono.2020.PRL}.
The heat engine is described by the tunable Hamiltonian $H_t=\varepsilon_t[\sin(\theta_t)\sigma_x+\cos(\theta_t)\sigma_z]/2$ and jump operators $L_1(t)=\sqrt{\gamma\varepsilon_t(n_t+1)}\dyad{0_t}{1_t}$ and $L_2(t)=\sqrt{\gamma\varepsilon_t n_t}\dyad{1_t}{0_t}$.
Here, $\gamma$ is the coupling strength, $\varepsilon_t$ and $\theta_t$ characterize the instantaneous energy gap and the relative strength of coherent tunneling to energy bias, respectively, $n_t\coloneqq 1/(e^{\beta_t\varepsilon_t}-1)$, and $\{\ket{0_t},\ket{1_t}\}$ denote the instantaneous eigenstates of the Hamiltonian.
The control protocol is given by $\{\alpha_t,\varepsilon_t,\theta_t\}$, where $\alpha_t=\sin(\pi t/\tau)^2$, $\varepsilon_t=\Omega [1+0.5\sin(2\pi t/\tau)+0.1\sin(6\pi t/\tau)]$, and $\theta_t=\pi[1+0.1\sin(\pi t/\tau)^2]/2$.

\begin{figure}[t]
\centering
\includegraphics[width=1.0\linewidth]{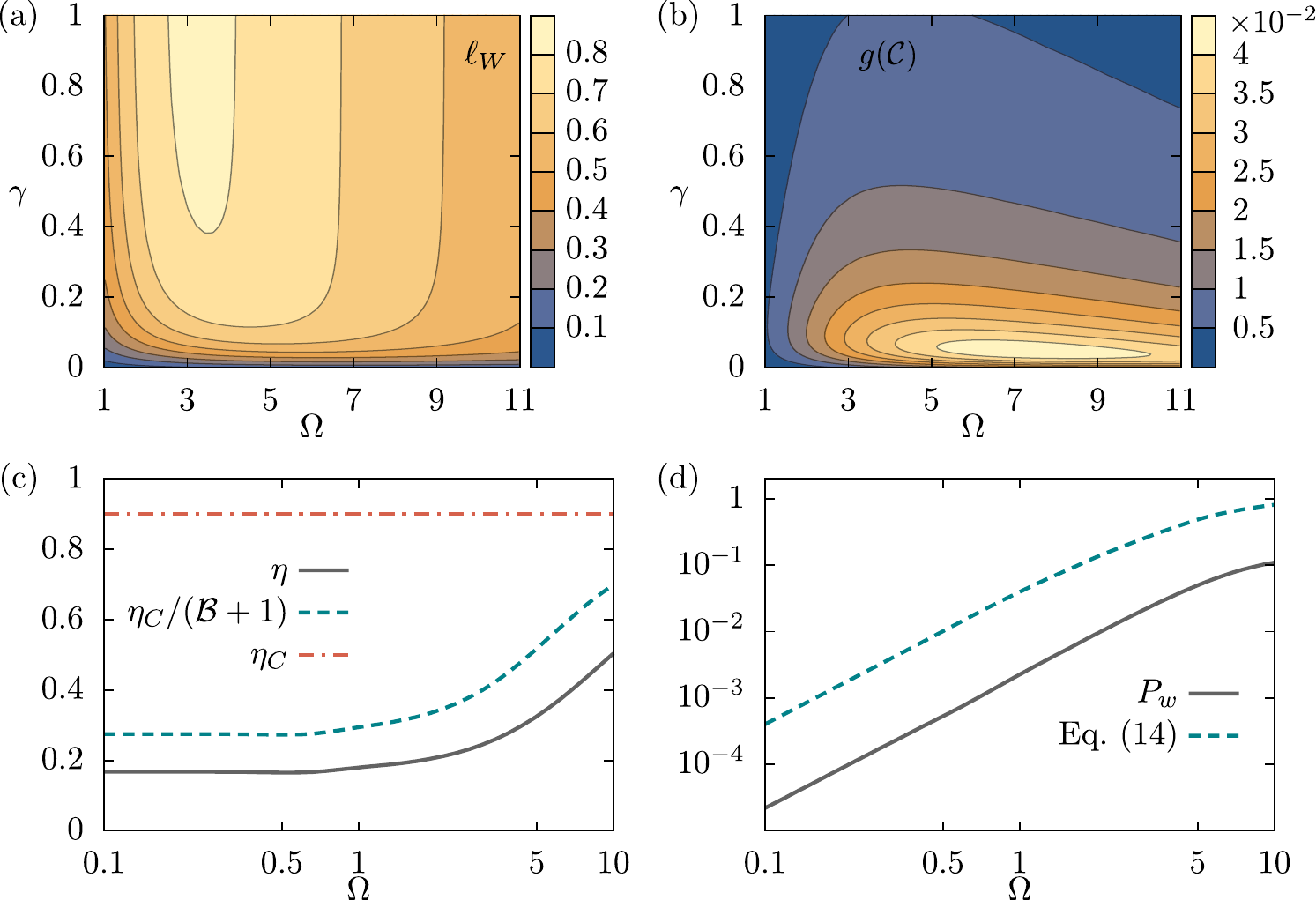}
\protect\caption{Numerical illustration for the qubit heat engine. The top panels show contour plots of (a) the geometric length $\ell_W$ and (b) the lower bound in Eq.~\eqref{eq:power.eff.tradeoff} on the $\Omega$-$\gamma$ plane. The bottom panels describe upper bounds of (c) efficiency [Eq.~\eqref{eq:eff.upb}] and (d) power [Eq.~\eqref{eq:power.bound}]. Parameter $\Omega$ is varied while other parameters are fixed as $T_c=1$, $T_h=10$, $\gamma=0.1$, and $\tau=10$.}\label{fig:Numerical}
\end{figure}

To help clarify the behavior of the geometric length $\ell_W$ and the lower bound in Eq.~\eqref{eq:power.eff.tradeoff}, contour plots of these quantities on the $\Omega$-$\gamma$ plane are shown in Figs.~\ref{fig:Numerical}(a) and \ref{fig:Numerical}(b), respectively.
As seen in Fig.~\ref{fig:Numerical}(a), the geometric length $\ell_W$ monotonically increases with respect to the coupling strength $\gamma$, which agrees with our intuition.
By contrast, $\ell_W$ is nonmonotonic with respect to $\Omega$.
That is, there always exists an optimal $\Omega$ that yields the maximum value of $\ell_W$, provided that $\gamma$ is fixed.
Figure \ref{fig:Numerical}(b) shows the landscape of the lower bound in Eq.~\eqref{eq:power.eff.tradeoff}, where the product $P_w(\eta_C/\eta-1)$ cannot be below.
It turns out that there is only one extreme point of $(\Omega,\gamma)$ that achieves the maximum of the lower bound. We next numerically verify the derived bounds [Eqs.~\eqref{eq:eff.upb} and \eqref{eq:power.bound}].
Parameter $\gamma$ is fixed whereas $\Omega$ is varied, and numerical results are plotted in Figs.~\ref{fig:Numerical}(c) and \ref{fig:Numerical}(d).
As shown in Fig.~\ref{fig:Numerical}(c), the bound \eqref{eq:eff.upb} yields a tighter restriction for efficiency $\eta$ than the conventional Carnot bound.
Figure \ref{fig:Numerical}(d) illustrates that power $P_w$ grows monotonically as parameter $\Omega$ (which characterizes the energy gap) is increased from $0.1$ to $10$, which is consistent with the qualitative statement deduced from Eq.~\eqref{eq:power.bound}.
The validity of the bounds can thus be confirmed from the figures.

\subsection{Proof of Eqs.~\eqref{eq:power.eff.tradeoff} and \eqref{eq:power.bound}}
First, we prove Eq.~\eqref{eq:power.eff.tradeoff}.
Let $\varrho_t=\sum_np_n(t)\dyad{n_t}$ be the spectral decomposition of the density matrix $\varrho_t$.
We then define transition rates between the eigenbasis as $r^k_{mn}(t)\coloneqq|\mel{m_t}{L_k(t)}{n_t}|^2\ge 0$.
Notice that $r^k_{mn}(t)=e^{\beta_t\omega_k(t)}r^{k'}_{nm}(t)$.
Taking the time derivative of $p_n(t)=\mel{n_t}{\varrho_t}{n_t}$, we can derive the following master equation for the distribution $\{p_n(t)\}$ from Eq.~\eqref{eq:mas.eq}:
\begin{equation}\label{eq:eigen.master.eq}
\dot{p}_n(t)=\sum_k\sum_{m(\neq n)}\qty[r^{k}_{nm}(t)p_m(t)-r^{k'}_{mn}(t)p_n(t)].
\end{equation}
For convenience, we define $a^{k}_{mn}(t)\coloneqq r^k_{mn}(t)p_n(t)$ and $j^{k}_{mn}(t)\coloneqq r^k_{mn}(t)p_n(t)-r^{k'}_{nm}(t)p_m(t)$.
Using these probability currents, we can express the master equation \eqref{eq:eigen.master.eq} as $\dot p_n(t)=\sum_k\sum_{m(\neq n)}j^k_{nm}(t)$.
The entropy production rate ${\sigma}_t\coloneqq \dot{\Sigma}_t$ and dynamical activity rate ${a}_t\coloneqq\partial_t[t\overline{a}_t]$ can be analytically expressed as
\begin{align}
\sigma_t&=\frac{1}{2}\sum_k\sum_{m,n}j^{k}_{mn}(t)\ln\frac{a^{k}_{mn}(t)}{a^{k'}_{nm}(t)}\eqqcolon\frac{1}{2}\sum_k\sum_{m,n}\sigma^{k}_{mn}(t),\label{eq:qua.ent.prod.rate}\\
a_t&=\frac{1}{2}\sum_k\sum_{m,n}[a^{k}_{mn}(t)+a^{k'}_{nm}(t)].
\end{align}
By noting that $x\Phi(x/y)^{-1}$ is a concave function, where $\Phi(x)$ is the inverse function of $x\tanh(x)$, and applying Jensen's inequality and the triangle inequality, we can prove that \cite{Vu.2023.PRX}
\begin{align}
\frac{\sigma_t}{2}\Phi\qty(\frac{\sigma_t}{2a_t})^{-1}&\ge\sum_{k,m,n}\frac{\sigma^{k}_{mn}(t)}{4}\Phi\qty(\frac{\sigma^{k}_{mn}(t)}{2[a^{k}_{mn}(t)+a^{k'}_{nm}(t)]})^{-1}\notag\\
&=\frac{1}{2}\sum_{k,m,n}|j^{k}_{mn}(t)|\notag\\
&\ge\frac{1}{2}\sum_n|\dot{p}_n(t)|.\label{eq:tmp2}
\end{align}
Taking the time integration of Eq.~\eqref{eq:tmp2} from $t$ to $t+\delta t$ and noting that $(1/2)\int_t^{t+\delta t}\sum_n|\dot{p}_n(s)|\dd{s}\ge W(\varrho_t,\varrho_{t+\delta t})$, we immediately obtain
\begin{equation}\label{eq:tmp1}
\int_t^{t+\delta t}\frac{\sigma_s}{2}\Phi\qty(\frac{\sigma_s}{2a_s})^{-1}\dd{s}\ge W(\varrho_t,\varrho_{t+\delta t}).
\end{equation}
Here, $\delta t>0$ is an arbitrary constant.
By summing both sides of Eq.~\eqref{eq:tmp1} for $t=k\delta t~(k=0,\dots,K-1)$, where $K=\tau/\delta t$, and exploiting again the concavity of $x\Phi(x/y)^{-1}$, we arrive at the following inequality:
\begin{align}
\sum_{k=0}^{K-1}W(\varrho_{k\delta t},\varrho_{(k+1)\delta t})&\le \frac{\Sigma_\tau}{2}\Phi\qty(\frac{\Sigma_\tau}{2\tau\overline{a}_\tau})^{-1}.
\end{align}
Taking the $\delta t\to 0$ limit, using the monotonicity of functions $x\tanh(x)$ and $\atanh(x)$, and rearranging the inequality $\ell_{W}\le (\Sigma_\tau/2)\Phi\qty(\Sigma_\tau/2\tau\overline{a}_\tau)^{-1}$ immediately yields Eq.~\eqref{eq:power.eff.tradeoff}.

Next, we prove Eq.~\eqref{eq:power.bound}.
By applying the Cauchy-Schwarz inequality $\tr{AB}\le\tr\{A^\dagger A\}^{1/2}\tr\{B^\dagger B\}^{1/2}$ for arbitrary matrices $A$ and $B$, the first argument of Eq.~\eqref{eq:power.bound} can be proved as follows:
\begin{align}
\tau P_w&=\int_0^\tau\tr{(H_t-\ev{H_t})\dot{\varrho}_t}\dd{t}\notag\\
&\le\int_0^\tau\tr{(H_t-\ev{H_t})^2\varrho_t}^{1/2}\tr{\dot{\varrho}_t^2\varrho_t^{-1}}^{1/2}\dd{t}\notag\\
&=\int_0^\tau \Delta H_t \sqrt{g_{\varrho_t}(\dot{\varrho}_t,\dot{\varrho}_t)}\dd{t}\notag\\
&\le \ell_{F}(\mca{C})\max_{t}\Delta H_t.
\end{align}
Next, we prove the second argument of Eq.~\eqref{eq:power.bound}.
Note that the trace distance can be expressed in the following variational form:
\begin{equation}\label{eq:trace.dist.var.form}
T(\varrho,\sigma)=\max_{\mbb{0}\le X\le\mbb{1}}\tr{X(\varrho-\sigma)},
\end{equation}
where the maximum is over all self-adjoint matrices $X$ satisfying the condition. 
Noticing $\|H_t\|_*=\max_n\varepsilon_n(t)-\min_n\varepsilon_n(t)$ and defining $\epsilon_t\coloneqq\min_n\varepsilon_n(t)$ and $X_t\coloneqq\|H_t\|_*^{-1}(H_t-\epsilon_t\mbb{1})$, we can easily show that $\mbb{0}\le X_t\le \mbb{1}$.
By applying the variational formula \eqref{eq:trace.dist.var.form}, we can prove as follows:
\begin{align}
\tau P_w&=\int_0^\tau\tr{(H_t-\epsilon_t\mbb{1})\dot{\varrho}_t}\dd{t}\notag\\
&=\int_0^\tau\|H_t\|_*\tr{X_t\dot{\varrho}_t}\dd{t}\notag\\
&\le\int_0^\tau\|H_t\|_*\frac{T(\varrho_t,\varrho_{t+dt})}{dt}\dd{t}\notag\\
&\le\ell_{T}(\mca{C})\max_t\|H_t\|_*.
\end{align}

\section{Conclusion}
We derived geometric bounds for the power and efficiency of cyclic heat engines driven at arbitrary speeds.
Specifically, we revealed a geometric tradeoff relation between power $P_w$ and efficiency term $\eta_C/\eta-1$, indicating that these quantities cannot be simultaneously small, provided that the geometric length and timescale of heat engines are predetermined.
This tradeoff relation not only provides insight into the performance of heat engines but also offers a useful method for estimating their efficiency.
In addition, we presented qualitative and quantitative constraints on power in terms of geometric length and energy statistics, showing that attaining high power output requires large energy gaps and energy fluctuations.

\begin{acknowledgments}
The authors thank Kay Brandner and Andreas Dechant for fruitful discussions. 
T.V.V.~was supported by JSPS KAKENHI Grant No.~JP23K13032.
K.S.~was supported by JSPS KAKENHI Grant No.~JP23H01099, No.~JP19H05603, and No.~JP19H05791.
\end{acknowledgments}

\appendix

\section{Classical trade-off relation between power and efficiency}\label{app:cla.tradeoff}
Here we provide the classical analog of the quantum trade-off relation \eqref{eq:power.eff.tradeoff} presented in the main text.
Note that in the classical case, the dynamics of the heat engine is described by the master equation:
\begin{equation}
\ket{\dot{p}_t}=\msf{R}_t\ket{p_t},
\end{equation}
where $\ket{p_t}\coloneqq[p_1(t),\dots,p_N(t)]^\top$ is the probability distribution of the system and $\msf{R}_t=[r_{mn}(t)]\in\mbb{R}^{N\times N}$ is the transition rate matrix with $r_{nn}(t)=-\sum_{m(\neq n)}r_{mn}(t)$.
The jump process is assumed to be microscopically reversible, that is $r_{mn}(t)>0$ whenever $r_{nm}(t)>0$.
For the sake of thermodynamic consistency, we also assume the detailed balance condition $r_{mn}(t)e^{-\beta_t\varepsilon_n(t)}=r_{nm}(t)e^{-\beta_t\varepsilon_m(t)}$, where $\varepsilon_n(t)$ denotes the instantaneous energy level of state $n$ and $\beta_t\coloneqq 1/(k_BT_t)$ is the inverse temperature.

In the classical case, we consider the discrete $L^1$-Wasserstein distance, which is defined based on the connectivity of the Markov jump process.
Consider a graph consisting of $N$ vertices. An edge between vertices $m$ and $n$ is present in the graph whenever there exists a transition between states $m$ and $n$ [i.e., $r_{mn}(t)>0$].
Let $\msf{D}=[d_{mn}]$ be the matrix of the shortest-path distances between vertices of the graph. Here, the length of a path is quantified by the number of edges contained in the path. Then, the $L^1$-Wasserstein distance can be defined in the following variational form \cite{Dechant.2022.JPA,Vu.2023.PRX}:
\begin{equation}
W(p,q)\coloneqq\min_{\pi}\tr{\msf{D}^\top\pi},
\end{equation}
where the minimum is over all joint probability distributions $\{\pi\}$ of which marginal distributions coincide with $p$ and $q$.
In the case of complete connectivity (i.e., when $d_{mn}=1$ for any $m\neq n$), the Wasserstein distance reduces to the total variation distance $T(p,q)\coloneqq\sum_n|p_n-q_n|/2$.
In general, we have $W(p,q)\ge T(p,q)$.

For convenience, we define the jump frequency, the probability current, and the entropy production rate associated with the transition between $m$ and $n$ as
\begin{align}
	a_{mn}(t)&\coloneqq r_{mn}(t)p_n(t),\\
	j_{mn}(t)&\coloneqq a_{mn}(t)-a_{nm}(t),\\
	\sigma_{mn}(t)&\coloneqq j_{mn}(t)\ln[{a_{mn}(t)}/{a_{nm}(t)}].
\end{align}
By these definitions, the dynamical activity and irreversible entropy production can be expressed as
\begin{align}
	\overline{a}_\tau&=\tau^{-1}\int_0^\tau\sum_{m\neq n}a_{mn}(t)\dd{t},\\
	\Sigma_\tau&=\int_0^\tau\sum_{m>n}\sigma_{mn}(t)\dd{t}.
\end{align}
It was shown that the classical Wasserstein distance can be upper bounded by the probability currents as \cite{Vu.2023.PRX}
\begin{equation}\label{eq:Wdist.ine1}
W(p_t,p_{t+\delta t})\le\int_t^{t+\delta t}\sum_{m>n}|j_{mn}(s)|\dd{s}.
\end{equation}
Summing both sides of Eq.~\eqref{eq:Wdist.ine1} for all $t=k\delta t\,(0\le k\le K-1)$ and taking the $\delta t\to 0$ limit yields
\begin{equation}\label{eq:LW.up}
\ell_{W}\le\int_0^\tau\sum_{m>n}|j_{mn}(t)|\dd{t}.
\end{equation}
By simple algebraic calculations, we can show that \cite{Vu.2023.PRX}
\begin{equation}
|j_{mn}(t)|=\frac{\sigma_{mn}(t)}{2}\Phi\qty(\frac{\sigma_{mn}(t)}{2[a_{mn}(t)+a_{nm}(t)]})^{-1},
\end{equation}
where $\Phi(x)$ is the inverse function of $x\tanh(x)$.
Since $x\Phi(x/y)^{-1}$ is a concave function over $(0,+\infty)\times(0,+\infty)$, applying Jensen's inequality yields
\begin{align}
\int_0^\tau\sum_{m>n}|j_{mn}(t)|\dd{t}&\le\frac{\Sigma_\tau}{2}\Phi\qty(\frac{\Sigma_\tau}{2\tau\overline{a}_\tau})^{-1}.\label{eq:Jt.up}
\end{align}
Combining Eqs.~\eqref{eq:LW.up} and \eqref{eq:Jt.up}, using the monotonicity of functions $x\tanh(x)$ and $\atanh(x)$, and performing expression transformation results in the following inequality:
\begin{equation}\label{eq:ent.prod.lb}
\Sigma_\tau\ge 2\ell_{W}\atanh\qty(\frac{\ell_{W}}{\tau\overline{a}_\tau}).
\end{equation}
Using Eq.~\eqref{eq:ent.prod.lb} and the expression of irreversible entropy production, we readily obtain the following trade-off relation:
\begin{equation}\label{eq:app.cla.tradeoff}
P_w\qty(\frac{\eta_C}{\eta}-1)\ge 2T_c\frac{\ell_W}{\tau}\atanh\qty(\frac{\ell_W}{\tau\overline{a}_\tau}).
\end{equation}

\section{Trade-off relation of heat engines in the adiabatic regime}\label{app:adia.tradeoff}
We consider the case that the heat engine is operating in the adiabatic regime (i.e., $\tau\gg 1$).
Let $\{\hat{\vb*{\Lambda}}_t=[\hat T_t,\hat \varepsilon_t,\hat \theta_t]\}_{0\le t\le 1}$ be a fixed reference protocol. The finite-time protocol is described by $H_t=H({\vb*{\Lambda}}_{t})$, $T_t=T({\vb*{\Lambda}}_{t})$, and $L_k(t)=L_k({\vb*{\Lambda}}_{t})$, where $\vb*{\Lambda}_t=\hat{\vb*{\Lambda}}_{t/\tau}$.
For convenience, we denote by $v^\infty$ the scalar quantity $v$ associated with the heat engine at the $\tau\to\infty$ limit.
It is convenient to define time-rescaled quantities
\begin{equation}
\hat{\varrho}_t=\varrho_{\tau t},~\hat{\mca{L}}_t=\mca{L}_{\tau t}.
\end{equation}
The Lindblad master equation can be rewritten as $\dot{\hat{\varrho}}_t=\tau\hat{\mca{L}}_t(\hat\varrho_t)$ ($0\le t\le 1$).
It should be noted that the super-operator $\hat{\mca{L}}_t$ is independent of $\tau$, while the time-rescaled solution $\hat{\varrho}_t$ is not.
As $\tau\to\infty$, the density matrix of the system remains close to the instantaneous equilibrium state.
Therefore, $\hat{\varrho}_t$ can be expanded in terms of $1/\tau$ as \cite{Cavina.2017.PRL}
\begin{equation}
\hat\varrho_t=\hat\varrho_{0,t}+\frac{1}{\tau}\hat\varrho_{1,t}+\frac{1}{\tau^2}\hat\varrho_{2,t}+\cdots=\pi_{\hat{\vb*{\Lambda}}_t}+O(\tau^{-1}).
\end{equation}
Here, $\pi_{\vb*{\Lambda}}\coloneqq e^{-H({\vb*{\Lambda}})/T({\vb*{\Lambda}})}/\tr e^{-H({\vb*{\Lambda}})/T({\vb*{\Lambda}})}$ is the Gibbs thermal state and $\{\hat{\varrho}_{n,t}\}_{n\ge 1}$ are traceless operators.
The operators $\{\hat\varrho_{n,t}\}$ satisfy the following equation:
\begin{equation}
	\dot{\hat\varrho}_{n,t}=\hat{\mca{L}}_t\qty[\hat\varrho_{n+1,t}]~\forall n\ge 0.
\end{equation}
It is thus evident that $\hat\varrho_t=\hat\varrho_{0,t}+O(\tau^{-1})$.
We assume that both $\{\hat p_n(t)\}$ and $\{\ket{\hat n_t}\}$ (eigenvalues and eigenvectors of $\hat\varrho_t$) can be expressed in perturbative forms as $\hat p_n(t)=\hat p_{0,n}(t)+\sum_{k=1}^\infty\tau^{-k}\hat p_{k,n}(t)$ and $\ket{\hat n_t}=\mket{\hat n_{0,t}}+\sum_{k=1}^\infty\tau^{-k}\mket{\hat n_{k,t}}$, where $\{\hat p_{0,n}(t)\}$ and $\{\mket{\hat n_{0,t}}\}$ are eigenvalues and eigenvectors of $\pi_{\hat{\vb*{\Lambda}}_t}$, respectively.
The first-order terms of these perturbative expressions can be explicitly calculated using $\hat\varrho_{n,t}$.
For example, by collecting the terms of order $O(\tau^{-1})$ in relation $\hat\varrho_t\ket{\hat n_t}=\hat p_n(t)\ket{\hat n_t}$, we obtain
\begin{align}
	\hat\varrho_{0,t}\mket{\hat n_{1,t}}+\hat\varrho_{1,t}\mket{\hat n_{0,t}}=\hat p_{0,n}(t)\mket{\hat n_{1,t}}+\hat p_{1,n}(t)\mket{\hat n_{0,t}}.\label{eq:per.exp.tmp1}
\end{align}
Multiplying $\bra{\hat n_{0,t}}$ to both sides of Eq.~\eqref{eq:per.exp.tmp1} and noticing that $\hat\varrho_{0,t}\ket{\hat n_{0,t}}=\hat p_{0,n}(t)\ket{\hat n_{0,t}}$ yields an explicit form for the first-order term of eigenvalues:
\begin{equation}
	\hat p_{1,n}(t)=\mel{\hat n_{0,t}}{\hat\varrho_{1,t}}{\hat n_{0,t}}.
\end{equation}
We next prove the following arguments: $\ell_W=\ell_W^\infty+O(\tau^{-1})$ and $\overline{a}_\tau=\overline{a}_\tau^\infty+O(\tau^{-1})$.
We now verify the first argument.
As proved in Proposition \ref{prop:geo.len.exp}, the geometric length $\ell_W$ can be expressed as
\begin{equation}
	\ell_W=\frac{1}{2}\int_0^\tau\sum_n|\dot p_n(t)|\dd{t}=\frac{1}{2}\int_0^1\sum_n|\dot{\hat p}_n(t)|\dd{t}.
\end{equation}
Noting that $\ell_W^\infty=(1/2)\int_0^1\sum_n|\dot{\hat p}_{0,n}(t)|\dd{t}$, we can prove as follows:
\begin{align}
	|\ell_W-\ell_W^\infty|&=\frac{1}{2}\big|\int_0^1\sum_n[|\dot{\hat p}_n(t)|-|\dot{\hat p}_{0,n}(t)|]\dd{t}\big|\notag\\
	&\le\frac{1}{2}\int_0^1\sum_n|\dot{\hat p}_n(t)-\dot{\hat p}_{0,n}(t)|\dd{t}\notag\\
	&=O(\tau^{-1}).
\end{align}
Thus, $\ell_W=\ell_W^\infty+O(\tau^{-1})$.
The argument for dynamical activity can be shown as follows:
\begin{align}
	\overline{a}_\tau&=\tau^{-1}\int_0^\tau\sum_k\tr{L_k(t)\varrho_tL_k(t)^\dagger}\dd{t}\notag\\
	&=\int_0^1\sum_k\tr{L_k(\hat{\vb*{\Lambda}}_t)[\pi_{\hat{\vb*{\Lambda}}_t}+O(\tau^{-1})]L_k(\hat{\vb*{\Lambda}}_t)^\dagger}\dd{t}\notag\\
	&=\int_0^1\sum_k\tr{L_k(\hat{\vb*{\Lambda}}_t)\pi_{\hat{\vb*{\Lambda}}_t}L_k(\hat{\vb*{\Lambda}}_t)^\dagger}\dd{t}+O(\tau^{-1})\notag\\
	&=\overline{a}_\tau^\infty+O(\tau^{-1}).
\end{align}
Using these asymptotic expressions and the trade-off relation \eqref{eq:power.eff.tradeoff} in the main text, we readily obtain the following geometric bound on efficiency in the adiabatic regime:
\begin{equation}\label{eq:app.tradeoff.adia}
	\eta\le\eta_C\qty(1-\frac{2T_c}{\overline{a}_\tau^\infty}\frac{(\ell_W^\infty)^2}{\mca{W}\tau}),
\end{equation}
where $\mca{W}\coloneqq\int_0^1\tr{H(\hat{\vb*{\Lambda}}_t)\dot\pi_{\hat{\vb*{\Lambda}}_t}}\dd{t}$ is the quasistatic work.
Once the control protocol is given, the upper bound in Eq.~\eqref{eq:app.tradeoff.adia} can be evaluated without solving the dynamics.  

In Ref.~\cite{Brandner.2020.PRL}, a geometric bound on efficiency was derived for cyclic heat engines in the adiabatic regime, given by
\begin{equation}\label{eq:prev.tradeoff.adia}
	\eta\le\eta_C\qty(1-\frac{\mca{L}^2}{\mca{W}\tau}),
\end{equation}
where $\mca{L}$ denotes the thermodynamic length.
Although both Eqs.~\eqref{eq:app.tradeoff.adia} and \eqref{eq:prev.tradeoff.adia} are geometric bounds, a qualitative difference between them is that our bound is in terms of only quantities at equilibriums (i.e., $\ell_W^\infty$ and $\overline{a}_\tau^\infty$), whereas the extant bound contains a response quantity from equilibriums (i.e., $\mca{L}$).
In general, there is no magnitude relationship between our bound and the extant bound.
To verify this, we calculate the two bounds for the heat engine model considered in the main text.
Note that $H_t=\varepsilon_t[\sin(\theta_t)\sigma_x+\cos(\theta_t)\sigma_z]/2=(\varepsilon_t/2)\qty[ \dyad{1_t} - \dyad{0_t} ]$, where $\ket{1_t}=[\cos(\theta_t/2),\sin(\theta_t/2)]^\top$ and $\ket{0_t}=[\sin(\theta_t/2),-\cos(\theta_t/2)]^\top$ are instantaneous energy eigenstates.
The quasistatic geometric length and dynamical activity can be calculated as follows:
\begin{align}
	\ell_W^\infty&=\frac{1}{2}\int_0^1\sum_n|\dot{\hat p}_{0,n}(t)|\dd{t}\notag\\
	&=\int_0^1\qty|\frac{d}{dt}\frac{1}{e^{\hat\beta_t\hat\varepsilon_t}+1}|\dd{t}\notag\\
	&=\int_0^1\frac{|\hat\beta_t\dot{\hat\varepsilon}_t+\dot{\hat\beta}_t\hat\varepsilon_t|e^{\hat\beta_t\hat\varepsilon_t}}{(e^{\hat\beta_t\hat\varepsilon_t}+1)^2}\dd{t},\\
	\overline{a}_\tau^\infty&=\int_0^1\sum_k\tr{L_k(\hat{\vb*{\Lambda}}_t)\pi_{\hat{\vb*{\Lambda}}_t}L_k(\hat{\vb*{\Lambda}}_t)^\dagger}\dd{t}\notag\\
	&=\int_0^1\qty[\gamma\hat\varepsilon_t(\hat n_t+1)\mel{\hat 1_t}{\pi_{\hat{\vb*{\Lambda}}_t}}{\hat 1_t}+\gamma\hat\varepsilon_t\hat n_t\mel{\hat 0_t}{\pi_{\hat{\vb*{\Lambda}}_t}}{\hat 0_t}]\dd{t}\notag\\
	&=2\int_0^1\frac{\gamma\hat\varepsilon_te^{\hat\beta_t\hat\varepsilon_t}}{e^{2\hat\beta_t\hat\varepsilon_t}-1}\dd{t}.
\end{align}
Following Ref.~\cite{Brandner.2020.PRL}, the thermodynamic length $\mca{L}$ can be calculated as
\begin{equation}
	\mca{L}=\int\sqrt{g_{\vb*{\Lambda}}^{\mu\nu}\dd{\Lambda_{\mu}}\dd{\Lambda_{\nu}}},
\end{equation}
where the metric tensor components $g_{\vb*{\Lambda}}^{\mu\nu}$ are given by
\begin{equation}
	g_{\vb*{\Lambda}}^{\mu\nu}=-\frac{R_{\vb*{\Lambda}}^{\mu\nu}+R_{\vb*{\Lambda}}^{\nu\mu}}{2}.
\end{equation}
Here, the adiabatic response coefficients $R_{\vb*{\Lambda}}^{\mu\nu}$ can be expressed as
\begin{equation}\label{eq:res.tmp}
	R_{\vb*{\Lambda}}^{\mu\nu}=-\frac{1}{T}\int_0^\infty\braket{\exp(\msf{K}_{\vb*{\Lambda}} u)\delta F_{\vb*{\Lambda}}^\mu}{\delta F_{\vb*{\Lambda}}^\nu}\dd{u},
\end{equation}
where $\braket{X}{Y}\coloneqq\int_0^1\tr{\pi_{\vb*{\Lambda}}^{1-s}X^\dagger\pi_{\vb*{\Lambda}}^{s}Y}\dd{s}$ is a scalar product and $\msf{K}_{\vb*{\Lambda}}$ denotes the adjoint Lindblad generator, defined as
\begin{equation}
	\msf{K}_{\vb*{\Lambda}}\varrho\coloneqq i[H(\vb*{\Lambda}),\varrho]+\sum_k\mca{D}^\ddag[L_k(\vb*{\Lambda})]\varrho.
\end{equation}
Here, $F_{\vb*{\Lambda}}^\mu=-\partial_\mu H(\vb*{\Lambda})$ for $\mu\in\{\varepsilon,\theta\}$, $F_{\vb*{\Lambda}}^\mu=-\ln\pi_{\vb*{\Lambda}}$ for $\mu=T$, $\delta F_{\vb*{\Lambda}}^\mu=F_{\vb*{\Lambda}}^\mu-\braket{\mbb{1}}{F_{\vb*{\Lambda}}^\mu}$, and $\mca{D}^\ddag[L]\varrho\coloneqq L^\dagger\varrho L-\{L^\dagger L,\varrho\}/2$.
The generator $\msf{K}_{\vb*{\Lambda}}$ has normalized eigenvectors $\{M_{i,\vb*{\Lambda}}\}$ and eigenvalues $\{\kappa_{i,\vb*{\Lambda}}\}$ as
\begin{equation}\label{eq:res.tmp1}
	\msf{K}_{\vb*{\Lambda}}M_{i,\vb*{\Lambda}}=\kappa_{i,\vb*{\Lambda}}M_{i,\vb*{\Lambda}},\quad \braket{M_{i,\vb*{\Lambda}}}{M_{j,\vb*{\Lambda}}}=\delta_{ij}.
\end{equation}
Specifically, $\{M_{i,\vb*{\Lambda}}\}$ and $\{\kappa_{i,\vb*{\Lambda}}\}$ are given by
\begin{align}
	M_{0,\vb*{\Lambda}}&=\mbb{1},~\kappa_{0,\vb*{\Lambda}}=0,\\
	M_{1,\vb*{\Lambda}}&=e^{\beta\varepsilon/2}\dyad{1} - e^{-\beta\varepsilon/2}\dyad{0},~\kappa_{1,\vb*{\Lambda}}=-\gamma\varepsilon\coth(\beta\varepsilon/2),\\
	M_{2,\vb*{\Lambda}}&=\sqrt{\beta\varepsilon\coth(\beta\varepsilon/2)}\dyad{1}{0},~\kappa_{2,\vb*{\Lambda}}=i\varepsilon-\frac{\gamma\varepsilon\coth(\beta\varepsilon/2)}{2},\\
	M_{3,\vb*{\Lambda}}&=\sqrt{\beta\varepsilon\coth(\beta\varepsilon/2)}\dyad{0}{1},~\kappa_{3,\vb*{\Lambda}}=-i\varepsilon-\frac{\gamma\varepsilon\coth(\beta\varepsilon/2)}{2}.
\end{align}
Note also that $\delta F_{\vb*{\Lambda}}^\mu$ can be expanded in terms of $\{M_{i,\vb*{\Lambda}}\}$ as $\delta F_{\vb*{\Lambda}}^\mu=\sum_i\braket{M_{i,\vb*{\Lambda}}}{\delta F_{\vb*{\Lambda}}^\mu}M_{i,\vb*{\Lambda}}$. By simple algebraic calculations, we can show that
\begin{align}
	\delta F_{\vb*{\Lambda}}^\varepsilon&=\frac{-1}{2\cosh(\beta\varepsilon/2)}M_{1,\vb*{\Lambda}},\\
	\delta F_{\vb*{\Lambda}}^T&=\frac{\beta\varepsilon}{2\cosh(\beta\varepsilon/2)}M_{1,\vb*{\Lambda}},\\
	\delta F_{\vb*{\Lambda}}^\theta&=\frac{1}{2}\sqrt{\frac{\varepsilon\tanh(\beta\varepsilon/2)}{\beta}}(M_{2,\vb*{\Lambda}}+M_{3,\vb*{\Lambda}}).\label{eq:res.tmp2}
\end{align}
Plugging Eqs.~\eqref{eq:res.tmp1}--\eqref{eq:res.tmp2} into Eq.~\eqref{eq:res.tmp}, the adiabatic response coefficients can be explicitly obtained as
\begin{widetext}
\begin{equation}
\begin{pmatrix}
	R_{\vb*{\Lambda}}^{\varepsilon\varepsilon} & R_{\vb*{\Lambda}}^{\varepsilon\theta} & R_{\vb*{\Lambda}}^{\varepsilon T}\\
	R_{\vb*{\Lambda}}^{\theta\varepsilon} & R_{\vb*{\Lambda}}^{\theta\theta} & R_{\vb*{\Lambda}}^{\theta T}\\
	R_{\vb*{\Lambda}}^{T\varepsilon} & R_{\vb*{\Lambda}}^{T\theta} & R_{\vb*{\Lambda}}^{TT}
\end{pmatrix} = \begin{pmatrix}
	\frac{-\beta\tanh(\beta\varepsilon/2)}{4\gamma\varepsilon\cosh(\beta\varepsilon/2)^2} & 0 & \frac{\beta^2\tanh(\beta\varepsilon/2)}{4\gamma\cosh(\beta\varepsilon/2)^2}\\
	0 & \frac{-\gamma}{[\gamma\coth(\beta\varepsilon/2)]^2+4} & 0\\
	\frac{\beta^2\tanh(\beta\varepsilon/2)}{4\gamma\cosh(\beta\varepsilon/2)^2} & 0 & \frac{-\beta^3\varepsilon\tanh(\beta\varepsilon/2)}{4\gamma\cosh(\beta\varepsilon/2)^2}
\end{pmatrix}.	
\end{equation}
Consequently, the thermodynamic length can be calculated as
\begin{equation}
	\mca{L}=\int_0^1\sqrt{\frac{\tanh(\hat\beta_t\hat\varepsilon_t/2)}{4\gamma\hat\beta_t\hat\varepsilon_t\cosh(\hat\beta_t\hat\varepsilon_t/2)^2}(\hat\beta_t\dot{\hat\varepsilon}_t+\hat\varepsilon_t\dot{\hat\beta}_t)^2+\frac{\gamma}{[\gamma\coth(\hat\beta_t\hat\varepsilon_t/2)]^2+4}\dot{\hat\theta}_t^2}\dd{t}.
\end{equation}
\end{widetext}
We plot the quantities $(2T_c/\overline{a}_\tau^\infty)(\ell_W^\infty)^2$ and $\mca{L}^2$ for a specific protocol in Fig.~\ref{fig:NumComp}.
As seen, $(2T_c/\overline{a}_\tau^\infty)(\ell_W^\infty)^2$ can be larger or smaller than $\mca{L}^2$, depending on the parameter region.
\begin{figure}[t]
\centering
\includegraphics[width=1\linewidth]{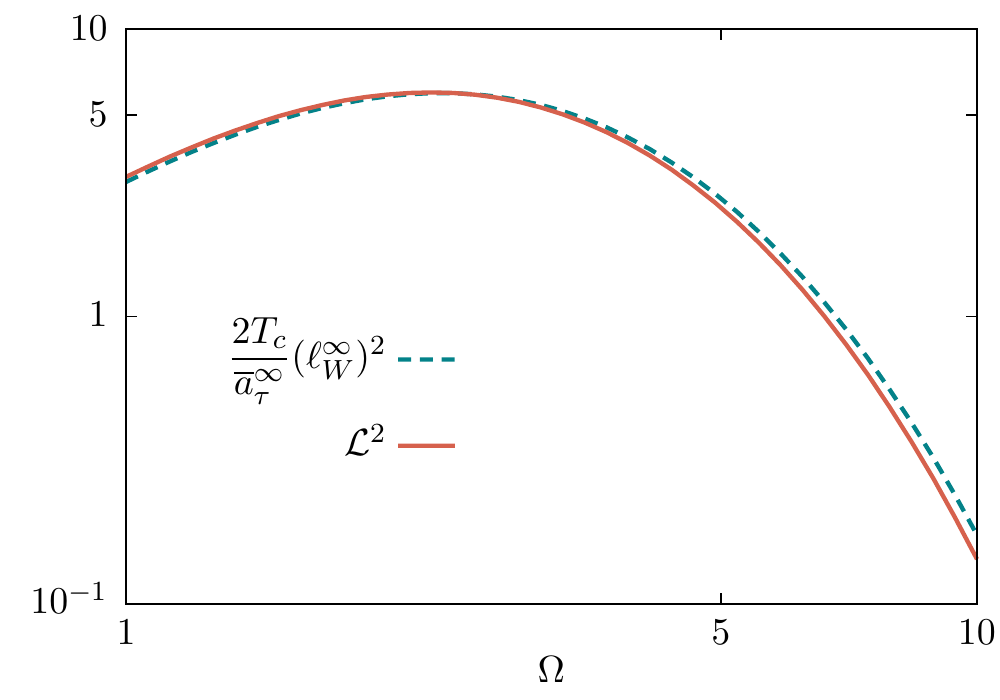}
\protect\caption{Numerical illustration of the geometric quantities $(2T_c/\overline{a}_\tau^\infty)(\ell_W^\infty)^2$ and $\mca{L}^2$ for the qubit heat engine in the adiabatic regime. The protocol is specified as $\hat\beta_t=[T_h+(T_c-T_h)\sin(\pi t)^2]/(T_hT_c)$, $\hat\varepsilon_t=\Omega [1+0.5\sin(2\pi t)+0.1\sin(6\pi t)]$, and $\hat\theta_t=0$. Parameter $\Omega$ is varied while other parameters are fixed as $T_c=1$, $T_h=1.1$, and $\gamma=0.1$.}\label{fig:NumComp}
\end{figure}
\begin{proposition}\label{prop:geo.len.exp}
For the generic case that the number of changing the magnitude order of the lines $\{p_n(t)\}^{0\le t\le\tau}_{n}$ is finite, the geometric length associated with the quantum Wasserstein distance can be expressed as
\begin{equation}
	\ell_W=\frac{1}{2}\int_0^\tau\sum_n|\dot p_n(t)|\dd{t}.
\end{equation}
\end{proposition}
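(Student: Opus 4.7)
The plan is to unpack the definitions and reduce the claim to an identity about sums over a sorting permutation. By definition of the quantum Wasserstein distance, $W(\varrho_t,\varrho_{t+\delta t})=\tfrac{1}{2}\sum_n|a_n(t)-a_n(t+\delta t)|$, where $a_n(t)$ denotes the $n$th smallest eigenvalue of $\varrho_t$. Let $\pi_t$ be the permutation that sorts the natural continuous branches $\{p_n(t)\}$ into increasing order, so that $a_n(t)=p_{\pi_t(n)}(t)$. The whole argument then consists of relating $\sum_n|\dot a_n(t)|$ to $\sum_n|\dot p_n(t)|$ via this permutation, and integrating.

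The finiteness assumption on order-crossings partitions $[0,\tau]$ into finitely many open subintervals $I_1,\dots,I_M$ on each of which $\pi_t$ equals a fixed permutation $\pi^{(m)}$. On $I_m$ the sorted eigenvalue $a_n$ inherits the regularity of $p_{\pi^{(m)}(n)}$, and in particular $\sum_n|\dot a_n(t)|=\sum_n|\dot p_{\pi^{(m)}(n)}(t)|=\sum_n|\dot p_n(t)|$, since a permutation of indices leaves the total sum invariant. This pointwise identity holds at every $t\in[0,\tau]$ except at the finite set of crossing times.

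I would then take $\delta t\to 0$ in the defining sum for $\ell_W$. Restricted to each $I_m$, the telescoped increments $\sum_k|a_n(k\delta t)-a_n((k+1)\delta t)|$ converge to $\int_{I_m}|\dot a_n(t)|\,dt$ by the standard convergence of variation sums for piecewise $C^1$ functions. The contribution of the finitely many subintervals $[k\delta t,(k+1)\delta t]$ that straddle a crossing is $O(\delta t)$ since the sorted $a_n(t)$ remain continuous and uniformly bounded, hence these terms vanish in the limit. Summing over $n$ and over $m$ yields $\ell_W=\tfrac{1}{2}\int_0^\tau\sum_n|\dot a_n(t)|\,dt$, and substituting the almost-everywhere identity from the previous step yields the claimed formula.

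The main obstacle I anticipate is precisely the treatment of the crossing instants. Even though $a_n(t)$ is continuous there, its derivative generically jumps (think $|t-t_c|$ versus $t-t_c$), so $\dot a_n$ need not coincide with any single $\dot p_{\pi(n)}$ in a neighborhood of a crossing, and the Riemann sum convergence must be handled piecewise. The hypothesis of finitely many crossings is exactly what makes this bookkeeping clean; without it, one would have to replace the piecewise-$C^1$ argument by a more delicate bounded-variation analysis of the sorted spectrum.
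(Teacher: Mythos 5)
Your proof is correct and follows essentially the same strategy as the paper's: both rest on the observation that between the finitely many order-crossing times the sorting permutation is constant (so the Wasserstein increment reduces to $\tfrac12\sum_n|p_n(t+\delta t)-p_n(t)|$) and that the $O(1)$ straddling intervals contribute only $O(\delta t)$. The one cosmetic difference is that you compute the limit directly as the total variation of the sorted branches $a_n(t)$ and then invoke the almost-everywhere identity $\sum_n|\dot a_n|=\sum_n|\dot p_n|$, whereas the paper sandwiches $\ell_W$ between an upper bound (using that the sorted-to-sorted matching minimizes $\sum_n|a_n-b_n|$ on every interval) and a lower bound obtained by discarding the crossing intervals; your route slightly streamlines this, though your justification of the $O(\delta t)$ estimate should cite boundedness of the derivatives $\dot p_n$ rather than of the functions themselves.
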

\begin{proof}
Define $K=\tau/\delta t$, where $\delta t$ is an infinitesimal time step.
Let $S_K\subseteq\{0,1,\dots,K-1\}$ be the set of indices such that there exist changes in the magnitude order of the lines $\{p_n(t)\}_n$ during the time interval $[k\delta t,(k+1)\delta t]$ for any $k\in S_K$.
Then, for any $k\notin S_K$, the magnitude order of $\{p_n(t)\}_n$ is invariant for all $t\in[k\delta t,(k+1)\delta t]$.
In this case, the quantum Wasserstein distance $W(\varrho_{k\delta t},\varrho_{(k+1)\delta t})$ can be expressed as
\begin{align}
	W(\varrho_{k\delta t},\varrho_{(k+1)\delta t})&=\frac{1}{2}\sum_n|a_n((k+1)\delta t)-b_n(k\delta t)|\notag\\
	&=\frac{1}{2}\sum_n|p_n((k+1)\delta t)-p_n(k\delta t)|~\forall k\notin S_K.\label{eq:Wdis.exp}
\end{align}
Here, $\{a_n((k+1)\delta t)\}$ and $\{b_n(k\delta t)\}$ are increasing eigenvalues of $\varrho_{(k+1)\delta t}$ and $\varrho_{k\delta t}$, respectively.
Since the number of changing the magnitude order is finite, there exists a constant $C<\infty$ such that $|S_K|<C$ for any $K$.
Therefore, we can show that
\begin{align}
	0\le \delta\ell&\coloneqq\lim_{\delta t\to 0}\frac{1}{2}\sum_{k\in S_K}\sum_n|p_{n}((k+1)\delta t)-p_{n}(k\delta t)|\notag\\
	&\le\frac{1}{2}\lim_{\delta t\to 0}C\delta t \max_{k}\sum_n\frac{|p_{n}((k+1)\delta t)-p_n(k\delta t)|}{\delta t}=0. 
\end{align}
In other words, we have $\delta\ell=0$.
From the definition of the quantum Wasserstein distance, we can upper bound $\ell_W$ as follows:
\begin{align}
	\ell_W&=\lim_{\delta t\to 0}\frac{1}{2}\sum_{k=0}^{K-1}\sum_n|a_{n}((k+1)\delta t)-b_{n}(k\delta t)|\notag\\
	&\le \lim_{\delta t\to 0}\frac{1}{2}\sum_{k=0}^{K-1}\sum_n|p_{n}((k+1)\delta t)-p_{n}(k\delta t)|\notag\\
	&=\frac{1}{2}\int_0^\tau\sum_n|\dot p_n(t)|\dd{t}.\label{eq:prop1.tmp1}
\end{align}
On the other hand, using Eq.~\eqref{eq:Wdis.exp} and the equality $\delta\ell=0$, we also obtain a lower bound for $\ell_W$ as
\begin{align}
	\ell_W&=\lim_{\delta t\to 0}\frac{1}{2}\sum_{k=0}^{K-1}\sum_n|a_{n}((k+1)\delta t)-b_{n}(k\delta t)|\notag\\
	&\ge \lim_{\delta t\to 0}\frac{1}{2}\sum_{k\notin S_K}\sum_n|a_{n}((k+1)\delta t)-b_{n}(k\delta t)|\notag\\
	&=\lim_{\delta t\to 0}\frac{1}{2}\sum_{k\notin S_K}\sum_n|p_{n}((k+1)\delta t)-p_{n}(k\delta t)|\notag\\
	&=\lim_{\delta t\to 0}\Big[\frac{1}{2}\sum_{k=0}^{K-1}\sum_n|p_{n}((k+1)\delta t)-p_{n}(k\delta t)|\notag\\
	&-\frac{1}{2}\sum_{k\in S_K}\sum_n|p_{n}((k+1)\delta t)-p_{n}(k\delta t)|\Big]\notag\\
	&=\frac{1}{2}\int_0^\tau\sum_n|\dot p_n(t)|\dd{t}-\delta\ell\notag\\
	&=\frac{1}{2}\int_0^\tau\sum_n|\dot p_n(t)|\dd{t}.\label{eq:prop1.tmp2}
\end{align}
From Eqs.~\eqref{eq:prop1.tmp1} and \eqref{eq:prop1.tmp2}, we immediately obtain $\ell_W=(1/2)\int_0^\tau\sum_n|\dot p_n(t)|\dd{t}$, which completes the proof.
\end{proof}

\end{document}